\documentclass[a4paper,10pt]{article}
\oddsidemargin  0.4 cm
\evensidemargin 0.4 cm
\textwidth     15.16 cm
\headsep        0.8 cm
\tracingstats=1

\usepackage[T1]{fontenc}
\usepackage[english]{babel}
\usepackage{amssymb,amsmath,latexsym,epsfig}
\usepackage[latin1]{inputenc}
\usepackage[T1]{fontenc}
\usepackage{bm}

\makeatletter
\newcommand\ackname{Acknowledgements}
\if@titlepage
  \newenvironment{acknowledgements}{%
      \titlepage
      \null\vfil
      \@beginparpenalty\@lowpenalty
      \begin{center}%
        \bfseries \ackname
        \@endparpenalty\@M
      \end{center}}%
     {\par\vfil\null\endtitlepage}
\else
  \newenvironment{acknowledgements}{%
      \if@twocolumn
        \section*{\abstractname}%
      \else
        \small
        \begin{center}%
          {\bfseries \ackname\vspace{-.5em}\vspace{\z@}}%
        \end{center}%
        \quotation
      \fi}
      {\if@twocolumn\else\endquotation\fi}
\fi
\makeatother

\title{A generalization of Kung's theorem\footnote{The original publication is available at http://link.springer.com/article/10.1007/s10623-015-0139-6}}
\author{Trygve Johnsen\thanks{ Dept. of Mathematics, University of Troms{\o}, N-9037 Troms{\o}, Norway, \texttt{Trygve.Johnsen@uit.no}} \and Keisuke Shiromoto\thanks{Dept. of Mathematics and Engineering, Kumamoto University, 2-39-1 Kurokami, Kumamoto, 860-8555 Japan, \texttt{keisuke@kumamoto-u.ac.jp}} \and Hugues Verdure\thanks{ Dept. of Mathematics, University of Troms{\o}, N-9037 Troms{\o}, Norway, \texttt{Hugues.Verdure@uit.no}} }

\pagestyle{plain}

\newtheorem{definition}{Definition}[section]
\newtheorem{remark}{Remark}[section]
\newtheorem{lemma}{Lemma}[section]
\newtheorem{corollary}{Corollary}[section]
\newtheorem{proposition}{Proposition}[section]
\newtheorem{theorem}{Theorem}[section]
\newtheorem{example}{Example}[section]

\newenvironment{proof}[1][Proof]{\begin{trivlist}
\item[\hskip \labelsep {\bfseries #1}]}{\end{trivlist}}

\newcommand{\qed}{\nobreak \ifvmode \relax \else
      \ifdim\lastskip<1.5em \hskip-\lastskip
      \hskip1.5em plus0em minus0.5em \fi \nobreak
      \vrule height0.75em width0.5em depth0.25em\fi}

\begin{document}

\maketitle

\begin{abstract}\noindent 
We give a generalization of Kung's theorem on critical exponents of linear codes over a finite field, in terms of sums of extended weight polynomials of linear codes.
For all $i=k+1,\ldots,n$, we give an upper bound on the smallest integer $m$ such that there exist $m$ codewords whose union of supports has cardinality at least $i$.

\noindent
Keywords: linear code \and Kung's bound \and generalized Singleton bound\\
\noindent
2010 Mathematics Subject Classification. 94B05 (05E40)
\end{abstract}

\newcommand{\Proj}{\mathbb{P}}
\newcommand{\N}{\mathbb{N}}
\renewcommand{\k}{\Bbbk}
\newcommand{\Fq}{\mathbb{F}_q}
\newcommand{\K}{\mathbb{K}}
\newcommand{\I}{\mathcal{I}}
\newcommand{\B}{\mathcal{B}}
\newcommand{\F}[1]{\mathbb{F}_{#1}}
\begin{acknowledgements}\noindent The first and third author are grateful to Thomas Britz and the second author for sharing early versions of the  joint  manuscript~\cite{BS} with all of us. The material and results there gave the inspiration for writing the present article.\\ 
The second author was supported by JSPS KAKENHI Grant Number 14474695.\\
The first author wants to thank IMPA, Rio de Janeiro, where he stayed while this work was completed.
\end{acknowledgements}
%
%

\section{Introduction and main result}

For a non-degenerate $[n,k]$ code $C$ over a finite field $\Fq$, let $c_j$ be the smallest integer $m$ such that there exists a codeword of Hamming weight at least $j$ in $C \otimes_{{\mathbb F}_q}{\mathbb F}_{q^m}$. In~\cite{Jthesis,JP}, the authors show that there exist polynomials $P_j(x)$, only dependent on the matroid associated to (any parity check matrix of) $C$, such that the number of codewords of Hamming weight $j$ in $C \otimes_{{\mathbb F}_q}{\mathbb F}_{q^m}$ is $P_j(q^m)$ (see Theorems 5.4 and 5.9 of~\cite{JP} for example).

We now recall the definition of these polynomials:
\begin{definition} \label{polynomials}
Let $M_C$ be the matroid associated to the parity check matrices of $C$, and $n_{M_C}$ its nullity function.
Set 

\[P_{j}(x)=(-1)^j\sum_{|\sigma|=j}\sum_{\gamma\subseteq\sigma}(-1)^{|\gamma|}x^{n_{M_C}(\gamma)} \text{ for }0\leqslant j\leqslant n.\] 
\end{definition} 

Hence we see that $c_j$ is the smallest integer $m$ such that $$P_j(q^m)+\ldots+P_n(q^m) \ne 0.$$
In the special case $j=n$, we see that $c_n$ is the smallest $m$ such that $P_n(q^m) \ne 0$.
The polynomials $P_j(x)$ can of course also be expressed in terms of functions obtained from the matroid of generator (instead of parity check) matrices of $C$. Then $$P_n(x)= \sum_{A \subset E}(-1)^{|A|}x^{\rho(E)-\rho(A)},$$ where $\rho$ is the rank function associated to  the matroid of generator matrices of $C$, and $E=\{1,2,\ldots,n\}.$ For simplicity, and in order to be in harmony with established notation, we will denote  $P_n(x)$ by $p(M_C; x)$.

This makes it natural to quote the "Critical Theorem" by Crapo and Rota,~\cite{CR}:
\begin{theorem}
Let $C$ be an $[n,k]$ code over $\Fq$. For any $X \subseteq E$ and any natural number $m$, the number of ordered $m$-tuples $({\bm v}_1, \ldots,{\bm v}_m)$ of codewords in $C$ with $\mbox{\rm supp}({\bm v}_1) \cup \ldots \cup \mbox{\rm supp}({\bm v}_m) = X$ is $p(M_{C/(E-X)};q^m),$ where $C/(E-X)$ is the code shortened by $E-X$. 
\end{theorem}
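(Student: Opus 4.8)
The plan is to fix a generator matrix $G$ of $C$ with columns $g_1,\ldots,g_n\in\Fq^{k}$, and to let $\rho$ denote the rank function of the associated matroid $M$ on $E=\{1,\ldots,n\}$, so that $\rho(A)$ is the rank of $\{g_j:j\in A\}$ and $\rho(E)=k$. For $Y\subseteq E$, let $N(Y)$ be the number of ordered $m$-tuples $({\bm v}_1,\ldots,{\bm v}_m)$ of codewords of $C$ with $\mathrm{supp}({\bm v}_1)\cup\cdots\cup\mathrm{supp}({\bm v}_m)=Y$, and let $f(Y)$ be the number of such $m$-tuples for which this union is merely contained in $Y$. The first step is to evaluate $f$: a tuple contributes to $f(Y)$ exactly when every ${\bm v}_i$ lies in the subcode $C_Y=\{{\bm v}\in C:\mathrm{supp}({\bm v})\subseteq Y\}$, so $f(Y)=|C_Y|^{m}$; and since ${\bm v}={\bm y}G$ lies in $C_Y$ precisely when ${\bm y}\in\langle g_j:j\in E\setminus Y\rangle^{\perp}$, while ${\bm y}\mapsto{\bm y}G$ is injective, we get $\dim C_Y=k-\rho(E\setminus Y)$ and hence $f(Y)=(q^m)^{\,k-\rho(E\setminus Y)}$.

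The second step is Möbius inversion over the Boolean lattice of subsets of $E$. Since the union of supports of a tuple counted by $f(Y)$ is some uniquely determined subset of $Y$, we have $f(Y)=\sum_{Z\subseteq Y}N(Z)$, and therefore
\[
N(X)=\sum_{Y\subseteq X}(-1)^{|X\setminus Y|}f(Y)=\sum_{Y\subseteq X}(-1)^{|X\setminus Y|}(q^m)^{\,k-\rho(E\setminus Y)}.
\]
Reindexing by $A=X\setminus Y\subseteq X$, so that $E\setminus Y=A\cup(E\setminus X)$ and $|X\setminus Y|=|A|$, this turns into $N(X)=\sum_{A\subseteq X}(-1)^{|A|}(q^m)^{\,k-\rho(A\cup(E\setminus X))}$.

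The final step is to identify this sum with $p(M_{C/(E-X)};q^m)$. The shortened code $C/(E-X)$ has ground set $X$, and a short computation with the rank function shows that its generator matroid is the contraction $M/(E\setminus X)$, whose rank function is $\bar\rho(A)=\rho(A\cup(E\setminus X))-\rho(E\setminus X)$ for $A\subseteq X$, with total rank $\bar\rho(X)=k-\rho(E\setminus X)=\dim C/(E-X)$. Thus $k-\rho(A\cup(E\setminus X))=\bar\rho(X)-\bar\rho(A)$, and the previous display becomes $N(X)=\sum_{A\subseteq X}(-1)^{|A|}(q^m)^{\,\bar\rho(X)-\bar\rho(A)}$, which is exactly $p(M_{C/(E-X)};x)=P_{n}(x)$ of the code $C/(E-X)$ evaluated at $x=q^m$, by the formula for $P_n$ recalled above. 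I expect this last identification to be the only real obstacle: one must verify that shortening a code corresponds to contracting its generator matroid and that the two expressions for the polynomial $p$ have matching exponents. The dimension count for $C_Y$ and the Möbius inversion are routine, and — since both sides vanish as soon as $X$ contains a coordinate that is a zero column of $G$ — no non-degeneracy hypothesis is required.
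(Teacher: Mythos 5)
Your proof is correct. Note, however, that the paper does not prove this statement at all: it is the Critical Theorem of Crapo and Rota, quoted with a citation to~\cite{CR} (and Remark~\ref{key} only explains, via~\cite{J}, how to reinterpret the count over $\Fq^m$). So there is no internal proof to compare against; what you have done is reconstruct the standard argument for the Critical Theorem, and you have done so accurately. The three ingredients are all sound: (i) the count $f(Y)=|C_Y|^m=(q^m)^{k-\rho(E\setminus Y)}$ of tuples whose support union is merely contained in $Y$, which uses only that a generator matrix of an $[n,k]$ code has full row rank so that ${\bm y}\mapsto{\bm y}G$ is injective; (ii) M\"obius inversion over the Boolean lattice, with the correct reindexing $E\setminus Y=A\cup(E\setminus X)$ for $A=X\setminus Y\subseteq X$; and (iii) the identification of the generator matroid of the shortened code $C/(E-X)$ with the contraction $M/(E\setminus X)$, whose rank function $\bar\rho(A)=\rho(A\cup(E\setminus X))-\rho(E\setminus X)$ makes the exponents match the paper's formula $p(M;x)=\sum_A(-1)^{|A|}x^{\rho(E)-\rho(A)}$. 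Point (iii), which you flag as the only real obstacle, is easily nailed down by the identity $\rho_D(A)=\dim D-\dim\{{\bm d}\in D:\mathrm{supp}({\bm d})\subseteq E_D\setminus A\}$ applied to $D=C/(E-X)$; your closing observation that both sides vanish when $X$ meets a zero column (a loop in the contraction kills the alternating sum) is also correct, so the statement indeed needs no non-degeneracy hypothesis.
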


\begin{remark} \label{key}
In particular the number of ordered $m$-tuples 
$({\bm v}_1, \ldots,{\bm v}_m)$ of codewords in $C$ with $\mbox{\rm supp}({\bm v}_1) \cup \ldots \cup \mbox{\rm supp}({\bm v}_m) = E$ is $p(M_C;q^m).$ 
Hence we see that the number of ordered $m$-tuples of codewords from $\Fq$ with support $E$, as described in~\cite{CR}, is the same as the number of codewords over ${\mathbb F}_{q^m}$ (with the same generator matrix) with support $E$, as counted above. The analogous statement
holds, also for any subset $X$ of $E$. This follows, for example, from Proposition 4 and Corollary 5 of~\cite{J}. In Proposition 4 one sets up an isomorphism between the set of ordered $m$-tuples over $\Fq$ in question, and the set of codewords over ${\mathbb F}_{q^m}$ in question, and in
Corollary 5 one states that this isomorphism is support-preserving (since the union of the supports of $m$ codewords over $\Fq$ is the same as the support of the subspace (of dimension at most $m$ over $\Fq$) generated by these words). 
Hence the formula in the Critical Theorem also counts the number of codewords over ${\mathbb F}_{q^m}$ with support $X$, and we understand that
for each $j$, the number $P_j(q^m)$ described above, also counts the number of ordered $m$-tuples of codewords from $\Fq$ with support weight $j$, that is, the size of their union of supports. Moreover this number appears as a sum of contributions $p(M_{C/(E-X)};q^m)$, where the sum is taken over all subsets $X$ of cardinality $j$. These observations give:
\end{remark}

\begin{corollary} \label{keycor}
For $1 \leqslant j \leqslant n$, we have: $c_j$ is the smallest integer $m$ such that there exists $m$ codewords over $\Fq$, whose 
union of supports has cardinality at least $j$.
\end{corollary}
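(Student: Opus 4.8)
The plan is to combine two facts that are already available in the excerpt. First, from the discussion following Definition~\ref{polynomials}, $c_j$ is by definition the smallest integer $m$ for which
\[
P_j(q^m) + P_{j+1}(q^m) + \cdots + P_n(q^m) \neq 0.
\]
Second, by Remark~\ref{key}, for each $i$ with $1 \leqslant i \leqslant n$ the number $P_i(q^m)$ equals the number of ordered $m$-tuples $(\bm{v}_1, \ldots, \bm{v}_m)$ of codewords of $C$ over $\Fq$ whose union of supports $\mbox{\rm supp}(\bm{v}_1) \cup \cdots \cup \mbox{\rm supp}(\bm{v}_m)$ has cardinality exactly $i$; in particular each $P_i(q^m)$ is a non-negative integer, being a count.

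Next I would fix $m$ and read off what the sum $\sum_{i=j}^{n} P_i(q^m)$ counts. Every ordered $m$-tuple of codewords of $C$ has a well-defined ``support weight'', namely the cardinality of its union of supports, which lies between $0$ and $n$; partitioning the tuples of support weight at least $j$ according to this exact value $i \in \{j, \ldots, n\}$ shows that $\sum_{i=j}^{n} P_i(q^m)$ is precisely the number of ordered $m$-tuples of codewords of $C$ over $\Fq$ whose union of supports has cardinality at least $j$. Since every summand is non-negative, this total is nonzero if and only if at least one such tuple exists, i.e.\ if and only if there exist $m$ codewords of $C$ over $\Fq$ whose union of supports has cardinality at least $j$ (the ordering being irrelevant for mere existence).

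Combining the two observations, the smallest $m$ for which $\sum_{i=j}^{n} P_i(q^m) \neq 0$ --- which is $c_j$ --- coincides with the smallest $m$ for which there exist $m$ codewords of $C$ over $\Fq$ whose union of supports has cardinality at least $j$, which is the assertion. I would append, merely to reassure the reader that ``the smallest such $m$'' is the natural notion, the monotonicity remark that if $m$ codewords with union of supports of size $\geqslant j$ exist, then so do $m+1$ of them (append the zero codeword), so the property in question is a genuine threshold in $m$.

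I do not expect a real obstacle here: the substantive content has already been carried out in Remark~\ref{key} via the Critical Theorem, and this corollary is essentially a repackaging of it together with the characterization of $c_j$ recalled above. The one point to handle with care is the non-negativity of $P_i(q^m)$, which rests entirely on the counting interpretation of these polynomials --- without it the sum could in principle vanish through cancellation and the equivalence would break down.
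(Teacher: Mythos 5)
Your proposal is correct and follows essentially the same route as the paper: the paper derives this corollary directly from Remark~\ref{key} (the counting interpretation of $P_i(q^m)$ as the number of ordered $m$-tuples with support weight $i$, via the Critical Theorem), combined with the characterization of $c_j$ as the smallest $m$ with $P_j(q^m)+\cdots+P_n(q^m)\neq 0$. You merely make explicit the non-negativity and summation step that the paper leaves implicit in the phrase ``These observations give''.
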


We now give: 
\begin{definition}
Let $C$ be a non-degenerate $[n,k]$ code. For $1 \leqslant j \leqslant n$, let  $\gamma_j$ be the smallest integer $m$ such that there exists a $m$-dimensional (over $\Fq$) subspace $D$ of $C$ with support weight at least $j$.
\end{definition}

We  then have:
\begin{proposition} \label{equal}
For each $1 \leqslant j \leqslant n$, we have $\gamma_j=c_j$.
\end{proposition}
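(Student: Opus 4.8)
The plan is to prove the equality by establishing the two inequalities $\gamma_j\leqslant c_j$ and $c_j\leqslant\gamma_j$ separately, using Corollary~\ref{keycor} as the bridge. That corollary says $c_j$ is the smallest $m$ for which there exist $m$ codewords $\bm{v}_1,\ldots,\bm{v}_m\in C$ (over $\Fq$) with $|\mathrm{supp}(\bm{v}_1)\cup\cdots\cup\mathrm{supp}(\bm{v}_m)|\geqslant j$. The only non-formal ingredient is the elementary observation, which I would record as a one-line preliminary remark, that for a subspace $D\subseteq C$ with spanning set $\bm{v}_1,\ldots,\bm{v}_r$, the support weight of $D$, i.e. the cardinality of $\bigcup_{\bm{v}\in D}\mathrm{supp}(\bm{v})$, equals $|\mathrm{supp}(\bm{v}_1)\cup\cdots\cup\mathrm{supp}(\bm{v}_r)|$: each $\bm{v}_i\in D$ gives one inclusion, and every element of $D$ being an $\Fq$-linear combination of the $\bm{v}_i$ (hence supported on the union of their supports) gives the reverse.

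For $c_j\leqslant\gamma_j$ I would set $m=\gamma_j$, choose an $m$-dimensional subspace $D\subseteq C$ with support weight at least $j$, and pick a basis $\bm{v}_1,\ldots,\bm{v}_m$ of $D$. By the observation above, $|\mathrm{supp}(\bm{v}_1)\cup\cdots\cup\mathrm{supp}(\bm{v}_m)|$ equals the support weight of $D$, hence is $\geqslant j$; so there are $m$ codewords over $\Fq$ whose union of supports has cardinality at least $j$, and Corollary~\ref{keycor} yields $c_j\leqslant m=\gamma_j$.

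For $\gamma_j\leqslant c_j$ I would set $m=c_j$ and, via Corollary~\ref{keycor}, choose codewords $\bm{v}_1,\ldots,\bm{v}_m\in C$ with $|\mathrm{supp}(\bm{v}_1)\cup\cdots\cup\mathrm{supp}(\bm{v}_m)|\geqslant j$. Put $D=\langle\bm{v}_1,\ldots,\bm{v}_m\rangle$ and $d=\dim_{\Fq}D\leqslant m$. Again by the observation, the support weight of $D$ equals $|\mathrm{supp}(\bm{v}_1)\cup\cdots\cup\mathrm{supp}(\bm{v}_m)|\geqslant j$, so $D$ is a $d$-dimensional subspace of $C$ of support weight at least $j$; by the minimality in the definition of $\gamma_j$ this gives $\gamma_j\leqslant d\leqslant m=c_j$. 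I would also note in passing that both quantities are well defined by non-degeneracy, since $D=C$ has support weight $n\geqslant j$, so $\gamma_j,c_j\leqslant k$.

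I do not anticipate a real obstacle. The only point that deserves care, and the reason the argument is not completely trivial, is the translation between "a subspace together with its support weight" and "a spanning (resp. basis) set together with the cardinality of the union of its members' supports", combined with the dimension bookkeeping $d\leqslant m$ in the second inequality — which is precisely why that step produces $\gamma_j\leqslant d$ rather than $\gamma_j=m$ outright, and why invoking the minimality clause of the definition of $\gamma_j$ is essential there.
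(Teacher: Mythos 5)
Your proof is correct. It differs from the paper's in that the paper does not spell out an argument at all: its proof of Proposition~\ref{equal} consists of the remark that Lemma~3 of~\cite{BS}, which treats the case $j=n$, ``immediately carries over'' to arbitrary $j$. You instead give a self-contained argument built on the paper's own Corollary~\ref{keycor}, namely the two-way translation between an $m$-tuple of codewords over $\Fq$ (with the cardinality of the union of their supports) and the subspace they span (with its support weight), plus the dimension bookkeeping $d\leqslant m$ and the minimality clause in the definition of $\gamma_j$; this is in all likelihood the same underlying idea as in~\cite{BS}, but written out for general $j$. What your version buys is independence from the external reference and an explicit verification that nothing special about $j=n$ is used (the only place $j$ enters is the threshold ``at least $j$''); what the paper's citation buys is brevity, at the cost of leaving the reader to check that the argument of~\cite{BS} really does not use fullness of the support. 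One small point worth keeping, which you already note, is that non-degeneracy guarantees both quantities are defined (take $D=C$, respectively a basis of $C$, to see $\gamma_j,c_j\leqslant k$), and that in the direction $\gamma_j\leqslant c_j$ possible repetitions or zero vectors among the $\bm{v}_i$ are harmless precisely because you conclude $\gamma_j\leqslant d$ with $d=\dim D\leqslant m$ rather than $\gamma_j\leqslant m$ directly.
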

\begin{proof}
The proof for the case $j=n$, as given in Lemma 3 of~\cite{BS}, immediately carries over to any $j$ in question.
\qed\end{proof}

Moreover, if $G$ is any generator matrix of $C$, then we have:

\begin{proposition}
\label{gamma_i}
For $1 \leqslant j \leqslant n$ we have: $c_j=\gamma_j=$ the smallest positive integer $m$ for which there exists a $(k-m)$-dimensional subspace of $\Fq^k$ that contains at most $n-j$ columns of $G$.
\end{proposition}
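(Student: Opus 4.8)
The plan is to start from Proposition~\ref{equal}, which already identifies $c_j$ with $\gamma_j$, and then to rephrase the defining condition for $\gamma_j$ in terms of the columns of $G$ by passing to orthogonal complements. Throughout, fix the generator matrix $G$ and write ${\bm g}_1,\ldots,{\bm g}_n\in\Fq^k$ for its columns. Since $C$ is non-degenerate of dimension $k$, the matrix $G$ has rank $k$, so the map $\varphi\colon\Fq^k\to C$, $\varphi({\bm x})={\bm x}G$, is a linear isomorphism. Hence $D\mapsto\varphi^{-1}(D)$ is a bijection from the set of $m$-dimensional $\Fq$-subspaces of $C$ onto the set of $m$-dimensional subspaces of $\Fq^k$.

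The key computation is the following. For ${\bm x}\in\Fq^k$, the $i$-th coordinate of ${\bm x}G$ equals ${\bm x}\cdot{\bm g}_i$. Therefore, for an $m$-dimensional subspace $U\subseteq\Fq^k$, the coordinate $i$ lies in $\mathrm{supp}(\varphi(U))$ if and only if ${\bm g}_i\notin U^{\perp}$, and so
\[
|\mathrm{supp}(\varphi(U))|=n-|\{\,i\ :\ {\bm g}_i\in U^{\perp}\,\}|.
\]
Writing $W:=U^{\perp}$, which is a $(k-m)$-dimensional subspace of $\Fq^k$, we conclude that $\varphi(U)$ has support weight at least $j$ precisely when $W$ contains at most $n-j$ of the columns of $G$ (counted with multiplicity). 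Note in particular that for $m=k$ the subspace $W=\{0\}$ contains no columns of $G$ by non-degeneracy, and $0\leqslant n-j$, so the set of admissible $m$ is non-empty and the minimum below makes sense.

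To finish, recall that $U\mapsto U^{\perp}$ is a bijection between the $m$-dimensional and the $(k-m)$-dimensional subspaces of $\Fq^k$. Combining this with the two bijections above, the existence of an $m$-dimensional subspace of $C$ of support weight at least $j$ is equivalent to the existence of a $(k-m)$-dimensional subspace of $\Fq^k$ containing at most $n-j$ columns of $G$. Passing to the smallest such $m$ shows that $\gamma_j$ equals this minimal value, and $c_j=\gamma_j$ by Proposition~\ref{equal}. The argument is essentially bookkeeping; the only points that need care are invoking non-degeneracy so that $\varphi$ is an isomorphism, and keeping track of the complement (and of multiplicities of equal columns) in the displayed support formula — I do not expect any substantial obstacle beyond this.
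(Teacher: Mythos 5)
Your argument is correct. The paper itself gives no direct proof of this proposition: it simply notes that the case $j=n$ (Lemma 5 of Britz--Shiromoto, originally Lemma 2 of Kl{\o}ve) carries over verbatim, so your contribution is to supply in full the standard duality argument that underlies that cited lemma. The key steps are all sound: since $C$ is non-degenerate, ${\bm x}\mapsto {\bm x}G$ is an isomorphism $\Fq^k\to C$; a coordinate $i$ lies in the support of the image of a subspace $U$ exactly when ${\bm g}_i\notin U^{\perp}$, giving $|\mathrm{Supp}(\varphi(U))|=n-|\{i:{\bm g}_i\in U^{\perp}\}|$ (with indices, hence multiplicities, handled correctly); and $U\mapsto U^{\perp}$ is a dimension-complementing bijection because the standard bilinear form on $\Fq^k$ is non-degenerate (it does not matter that it may be isotropic). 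This yields, for each $m$, the equivalence between existence of an $m$-dimensional subcode of support weight at least $j$ and existence of a $(k-m)$-dimensional subspace of $\Fq^k$ containing at most $n-j$ columns of $G$, so the two minima coincide, and $c_j=\gamma_j$ by Proposition~\ref{equal}. Your remark that $m=k$ is always admissible (the zero subspace contains no column, by non-degeneracy) correctly guarantees the minimum exists. In short: same mathematics as the reference the paper points to, but written out self-containedly, which if anything is a gain in completeness.
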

\begin{proof}
The proof for the case $j=n$, stated in Lemma 5 of~\cite{BS}, and originally in Lemma 2 of~\cite{Kl} carries over to any $j$ in question. 
\qed\end{proof}

Let $C$ be an $[n,k]$ code over ${\mathbb F}_q$.
For any $i$, $1 \leqslant i \leqslant k$, the $i$-{\it th generalized Hamming weight} of $C$ is defined by $$d_i=\min\{|\mbox{\rm Supp}(D)| \: : \: \mbox{\rm $D$ is an $[n,i]$ subcode of $C$}\},$$ where $\mbox{\rm Supp}(D)=\bigcup_{{\bm x}\in C}\mbox{\rm supp}({\bm x})$. Then the following result is known as the {\it generalized Singleton bound} (cf.~\cite{W}).
\begin{lemma} \label{singleton}
\label{GSbound}$$d_i \leqslant n-k+i,\:\: i=1,2,\ldots,k.$$
\end{lemma}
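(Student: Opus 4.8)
The statement to prove is $d_i \leqslant n-k+i$ for $i=1,\dots,k$. The plan is to produce, for each such $i$, an explicit $i$-dimensional subcode $D$ of $C$ whose support has size at most $n-k+i$, which immediately bounds the minimum $d_i$. First I would fix a generator matrix $G$ of $C$, so that $C$ is the row space of $G$ and has dimension $k$. I would then single out a set $T\subseteq E=\{1,\dots,n\}$ of $n-k-i+1$ coordinates that I wish to force to be zero, and consider the subcode $D$ consisting of all codewords vanishing on every coordinate in $T$; equivalently, $D = C \cap \mathbb{F}_q^{E\setminus T}$ is obtained by shortening $C$ at $T$. By construction $\mbox{\rm Supp}(D) \subseteq E\setminus T$, so $|\mbox{\rm Supp}(D)| \leqslant n - (n-k-i+1) = k+i-1$; but that is the wrong bound, so instead I would reverse the counting: I want $|T|$ to be as large as possible while keeping $\dim D \geqslant i$.

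The cleaner route is a dimension count. Shortening at a single coordinate drops the dimension by at most one, hence shortening at a set $T$ of size $t$ drops the dimension by at most $t$; so if $t = n-k+i-1 \cdots$ wait, I would instead choose $t$ so that $k - t \geqslant i$, i.e. $t \leqslant k-i$. Taking $t = k-i$, the shortened code $D = C \cap \mathbb{F}_q^{E\setminus T}$ has dimension at least $k-(k-i) = i$; restricting to an $i$-dimensional subspace if necessary, we get an $[n,i]$ subcode with $\mbox{\rm Supp}(D) \subseteq E\setminus T$, hence $|\mbox{\rm Supp}(D)| \leqslant n - (k-i) = n-k+i$. This yields $d_i \leqslant n-k+i$.

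Concretely, to justify ``dimension drops by at most one per shortened coordinate'' I would argue that $C \cap \mathbb{F}_q^{E\setminus\{e\}}$ is the kernel of the evaluation-at-$e$ functional $C \to \mathbb{F}_q$, whose image has dimension $0$ or $1$; iterating over the $k-i$ coordinates of $T$ gives the bound. Alternatively, and perhaps most in the spirit of this paper, I could invoke Proposition~\ref{gamma_i}: taking $m=i$ there, $c_i=\gamma_i$ is the smallest $m$ for which some $(k-m)$-dimensional subspace of $\mathbb{F}_q^k$ contains at most $n-j$ columns of $G$; choosing $j=n-k+i$ and exhibiting a $(k-i)$-dimensional coordinate subspace that meets at most $n-j = k-i$ columns of $G$ shows $\gamma_i \leqslant i$, which by Proposition~\ref{equal} and Corollary~\ref{keycor} translates into the existence of $i$ codewords whose union of supports has size $\geqslant n-k+i$, and this union spans an at-most-$i$-dimensional subcode realizing $d_i \leqslant n-k+i$. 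The main obstacle is purely bookkeeping: making sure the inequalities between $m$, $t$, $i$ and $j$ point the right way, and confirming that the shortened code genuinely has dimension \emph{at least} $i$ (not merely ``generically''); a generic-choice or pigeonhole argument on which $k-i$ coordinates to kill handles this, since we are free to pick $T$ among the $n$ coordinates and $n \geqslant k$.
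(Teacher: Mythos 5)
Your main argument is correct: shortening $C$ at a set $T$ of $k-i$ coordinates drops the dimension by at most $|T|$ (each coordinate functional has at most one-dimensional image), so the shortened code has dimension at least $i$; any $i$-dimensional subspace of it is an $[n,i]$ subcode supported inside $E\setminus T$, hence of support weight at most $n-(k-i)=n-k+i$, giving $d_i\leqslant n-k+i$. This is the standard proof (essentially Wei's), and it is a fair self-contained substitute here, since the paper itself states the lemma without proof and simply cites~\cite{W}. Two remarks. First, your opening paragraph contains a false start (choosing $|T|=n-k-i+1$ and getting the bound $k+i-1$) which you then correct; the final choice $|T|=k-i$ is the right one, and no genericity or pigeonhole argument is needed --- \emph{any} set $T$ of $k-i$ coordinates works. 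Second, the ``alternative'' route through Proposition~\ref{gamma_i} and Corollary~\ref{keycor} does not work as stated: those results concern $\gamma_j$, i.e.\ the existence of low-dimensional subcodes whose support weight is \emph{at least} $j$, whereas the generalized Singleton bound requires an $i$-dimensional subcode whose support weight is \emph{at most} $n-k+i$; the inequality points the wrong way, so producing $i$ codewords with union of supports of size at least $n-k+i$ gives no upper bound on $d_i$. Since that route is offered only as an option and your shortening argument is complete, the proof stands, but you should drop or repair the alternative.
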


For $1 \leqslant i \leqslant n-k$, let $s_i^\perp$ be the $i$-th Singleton defect $n-(n-k)+i-d_i^{\perp} =k+i-d_i^{\perp}$, where $d_i^{\perp}$ is the $i$-th generalized Hamming weight of the dual code $C^{\perp}$. We now give the main result of the paper.

\begin{theorem}[Main result]\label{main}
For a non-degenerate $[n,k]$ code $C$ over $\Fq$ with $n >k$, we have $$c_i=\gamma_i \leqslant s_{n+1-i}^\perp+2$$ for $i=k+1,\ldots, n.$
\end{theorem}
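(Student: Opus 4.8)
The plan is to reduce, via Proposition~\ref{gamma_i}, to an elementary statement about the columns of a fixed generator matrix $G$ of $C$, and then to run an argument generalizing the standard proof of Kung's bound (the case $i=n$). Write $j=n+1-i$, so that $j\in\{1,\dots,n-k\}$, put $d=d_j^\perp$, and recall that $s_{n+1-i}^\perp=k+j-d$. By Proposition~\ref{gamma_i} it suffices to produce a subspace $W\subseteq\Fq^k$ with $\dim W=k-(s_{n+1-i}^\perp+2)=d-j-2$ containing at most $n-i=j-1$ of the columns $g_1,\dots,g_n$ of $G$.

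Before the construction I would dispose of the degenerate cases. Applying the generalized Singleton bound (Lemma~\ref{GSbound}) to the $[n,n-k]$ code $C^\perp$ gives $d\leqslant k+j$, so $d-j-2\leqslant k-2$; on the other hand $d\geqslant j$ always. If $d\leqslant j+1$ then $s_{n+1-i}^\perp+2\geqslant k+1$ and the inequality is trivial, since $\gamma_i\leqslant k$ (the whole code $C$ is a $k$-dimensional subcode of full support $n\geqslant i$). So I may assume $d\geqslant j+2$, and then $r:=d-j\geqslant 2$ and $d-j-2=r-2\geqslant 0$ is a genuine dimension.

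For the construction I would pick a $j$-dimensional subcode of $C^\perp$ of minimal support $S$, so $|S|=d$. The translation to $G$ uses the identity $C^\perp=\{(\lambda_e)_e:\sum_e\lambda_eg_e=0\}$: the largest subcode of $C^\perp$ supported on a set $\sigma\subseteq E$ is the solution space in $(\lambda_e)_{e\in\sigma}$ of $\sum_{e\in\sigma}\lambda_eg_e=0$, of dimension $|\sigma|-\operatorname{rank}\{g_e:e\in\sigma\}$. Hence $|S|-\operatorname{rank}\{g_e:e\in S\}\geqslant j$, and minimality of $|S|$ forces $\operatorname{rank}\{g_e:e\in S\}=r$ exactly (otherwise the nullity would exceed $j$, and removing from $S$ a column contained in the span of the others would give a $j$-dimensional subcode of $C^\perp$ of support $<d$). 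Choose an independent subset $T\subseteq S$ with $|T|=r-1$ and set
\[
W:=\Bigl\{\,\textstyle\sum_{e\in T}c_eg_e\ :\ \sum_{e\in T}c_e=0\,\Bigr\}\subseteq\Fq^k .
\]
Since $T$ is independent, $(c_e)_{e\in T}\mapsto\sum_{e\in T}c_eg_e$ is injective, so $W$ is the image of the hyperplane $\{(c_e):\sum_{e\in T}c_e=0\}$ and $\dim W=|T|-1=r-2=d-j-2$, as wanted.

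The crucial — and, I expect, most delicate — step is the claim that $W$ contains at most $j-1$ columns of $G$; this is exactly where the \emph{minimality} of $d_j^\perp$ (not merely its value) is used, and it is what produces the ``$+2$''. Let $R=\{e\in E:g_e\in W\}$. First $T\cap R=\emptyset$, because for $e\in T$ the unique expression of $g_e$ over $T$ has coefficient sum $1\neq 0$. Suppose, for contradiction, that $|R|\geqslant j$, and pick $R'\subseteq R$ with $|R'|=j$. Every $g_e$ with $e\in R$ lies in $W\subseteq\operatorname{span}\{g_{e'}:e'\in T\}$, so $\operatorname{rank}\{g_e:e\in T\cup R'\}=\operatorname{rank}\{g_e:e\in T\}=r-1$, while $|T\cup R'|=(r-1)+j=d-1$. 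Then $C^\perp$ has a subcode supported on $T\cup R'$ of dimension $(d-1)-(r-1)=j$, contradicting $d_j^\perp=d$. Hence $|R|\leqslant j-1$, and Proposition~\ref{gamma_i} gives $c_i=\gamma_i\leqslant s_{n+1-i}^\perp+2$. For $j=1$ this recovers Kung's theorem: there $d=\operatorname{rank}\{g_e:e\in S\}+1$ is the girth of the column matroid of $G$, and the same computation forces $R=\emptyset$, i.e.\ $W$ avoids every column of $G$.
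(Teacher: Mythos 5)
Your proof is correct and is essentially the paper's argument: both reduce via Proposition~\ref{gamma_i} to exhibiting a subspace of dimension $d_{n+1-i}^\perp-(n+1-i)-2$ containing at most $n-i$ columns of $G$, and both obtain it as a hyperplane inside the span of $r-1$ independent columns that avoids those columns, with the same count $(r-1)+(n+1-i)=d_{n+1-i}^\perp-1$ yielding the contradiction with the minimality of $d_{n+1-i}^\perp$ (which the paper cites from Wei as a column-rank property and you rederive via dual subcodes). The differences are presentational: you argue directly rather than by contradiction, take the $r-1$ independent columns inside a minimal dual support rather than from the identity block of $[I_k|A]$ (your coefficient-sum-zero hyperplane is the paper's subspace $D$ with $\alpha=-1$), and you explicitly dispose of the degenerate case $d_{n+1-i}^\perp\leqslant n+2-i$, which the paper leaves implicit.
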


\begin{remark}\label{bad_indices}
We obviously have $\gamma_{i} = 1$ for all $i \leqslant k$, since the word $\Sigma$, being the sum of all rows in any generator matrix $G$, on standard form after a possible permutation of columns, has Hamming weight at least $k$.

If we define $s_i^\perp=-1$ for $n-k+1 \leqslant i\leqslant n$, then the theorem remains true for $1\leqslant i \leqslant k$, and the bound is actually sharp for these $i$. We use this convention in the sequel.
\end{remark}

\begin{corollary}[Kung's theorem,~\cite{K}]\label{main2}
For a non-degenerate $[n,k]$ code $C$ over $\Fq$, we have $$\gamma_n \leqslant k-d^{\perp}+3,$$ where $d^{\perp}$ is the minimum Hamming weight of the dual code $C^{\perp}$.
\end{corollary}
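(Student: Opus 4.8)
The plan is to obtain the bound as the special case $i=n$ of Theorem~\ref{main}, so that essentially no new estimate is needed beyond a bit of index bookkeeping.

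First I would dispose of the degenerate situation $n=k$: there $C^{\perp}=\{0\}$ and $d^{\perp}$ is not defined (or is set to $+\infty$ by convention), so there is nothing to prove. Assume then $n>k$, so that $C^{\perp}$ is a genuine non-zero code. Here I would record that the ordinary minimum Hamming distance $d^{\perp}$ of $C^{\perp}$ equals its first generalized Hamming weight $d_{1}^{\perp}$: a one-dimensional subcode generated by a minimum-weight word has support weight $d^{\perp}$, and conversely every one-dimensional subcode has support weight equal to the Hamming weight of a generator, so the minimum in the definition of $d_{1}^{\perp}$ is attained precisely at $d^{\perp}$.

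Next I would apply Theorem~\ref{main} to the non-degenerate code $C$ with the index $i=n$, which lies in the admissible range $k+1\leqslant i\leqslant n$ because $n>k$. Since $n+1-i=1$, this yields $\gamma_{n}=c_{n}\leqslant s_{1}^{\perp}+2$. By the definition of the Singleton defect, $s_{1}^{\perp}=k+1-d_{1}^{\perp}$, which by the previous paragraph equals $k+1-d^{\perp}$. Substituting gives $\gamma_{n}\leqslant (k+1-d^{\perp})+2=k-d^{\perp}+3$, which is the asserted bound. There is essentially no obstacle here: the entire content of the argument is already carried by Theorem~\ref{main}, and the only thing requiring even minimal attention is handling the indices and the $n=k$ edge case correctly.
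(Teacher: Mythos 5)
Your proposal is correct and follows essentially the same route as the paper: specialize Theorem~\ref{main} to $i=n$ and unwind the definition of $s_1^{\perp}=k+1-d_1^{\perp}=k+1-d^{\perp}$. Your extra care with the $n=k$ edge case and the identification $d_1^{\perp}=d^{\perp}$ is fine but adds nothing beyond what the paper's one-line computation already contains.
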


\begin{proof}(of the corollary):
From the definition of the $i$-th Singleton defect, we have that \begin{eqnarray*} s_{(n+1)-n}^\perp+2 &=& s_1^\perp+2\\&=& (k+1-d_1^{\perp})+2\\ &=& k-d_1^{\perp}+3\\&=& k-d^{\perp}+3.\end{eqnarray*}\qed
\end{proof}

\begin{remark}
{In~\cite{JRV}, one expresses the polynomials from the present Definition~\ref{polynomials} in terms of the Betti numbers of the Stanley-Reisner rings of the matroid associated to the parity check matrix of the code in question, and so-called elongations of this matroid. The matroids are then thought of simplicial complexes with facets the bases of the matroids. Our main result is formulated in terms of properties of these polynomials. Hence the material in the present paper gives another link between coding theory properties, and algebraic topological properties of associated matroids appearing in a natural way.}
\end{remark}

\section{Proof of the generalization of Kung's bound}

We now prove the main result of the article.

\begin{proof} (of Theorem~\ref{main}): Let $G$ be a generator matrix of $C$. Set $d_{n+1-i}^{\perp}:=t$. Then it is known that the rank of any set of $(t-1)$ column vectors in $G$ is at least $t-(n+1-i)$ and there exists a set of $t$ column vectors whose rank is $t-(n+1-i)$ (see, for instance~\cite{W}). We assume that $C$ has a generator matrix of the form $[I_k|A]$. From the generalized Singleton bound, it follows that $$ t \leqslant k+(n+1-i).$$ So we can take ${\bm e}_1,\ldots, {\bm e}_{t-n-2+i}$ as linearly independent column vectors in $G$, where ${\bm e}_i$ denotes the vector in $\mathbb{F}_q^k$ with a $1$ in the $i$th coordinate and $0$'s elsewhere.

Assume, for contradiction, that $c_i=\gamma_i > s^{\perp}_{n+1-i}+2=k+3+n-i-t$. By Proposition~\ref{gamma_i}, this means that any $(k-(n+k-i-t+3))=(t+i-3-n)$-dimensional subspace of $\Fq^k$ contains at least $n-i+1$ column vectors in $G$. Let $\alpha \in \Fq$ be non-zero. Then the $(t-n-3+i)$-dimensional subspace $D$ generated by ${\bm e}_1+\alpha{\bm e}_{t-n-2+i},\ldots,{\bm e}_{t-n-3+i}+\alpha{\bm e}_{t-n-2+i}$, contains (at least) $n-i+1$ distinct column vectors of $G$, say \begin{eqnarray*}   {\bm u}_1 &=&(u_1^{(1)}, \ldots,u_{t-n-2+i}^{(1)},0,\ldots,0)^t\\         &\vdots&\,\\  {\bm u}_{n-i+1} &=&(u_1^{({n-i+1})}, \ldots,u_{t-n-2+i}^{({n-i+1})},0,\ldots,0)^t.\,\end{eqnarray*} None of these ${\bm u}_j$ is equal to any ${\bm e}_l$, for $j=1,\ldots,n-i+1$ and $l=1,\ldots,t-n-2+i.$ (Set a linear combination of ${\bm e}_1+\alpha{\bm e}_{t-n-2+i},\ldots,{\bm e}_{t-n-3+i}+\alpha{\bm e}_{t-n-2+i}$ equal to $e_l$, and derive the contradiction $\alpha=0$.)

Therefore $\{{\bm u}_1,\ldots,{\bm u}_{n-i+1},{\bm e}_1,\ldots,{\bm e}_{t-n-2+i}\}$ is a set of $(t-1)$ column vectors in $G$. We see any $t-n-1+i$ vectors of this set are linearly dependent, since they effectively are $t-n-1+i$ vectors in $(t-n-2+i)$-space. Hence the rank of the subspace of $\Fq^k$ spanned by these $t-1$ vectors is strictly less than $t-(n+1-i)$. 

This is a contradiction, and the proof is complete.
\qed\end{proof}

\section{An analysis of the sharpness of the new bound}
Some initial observations are:

Inserting $i=k+1$ in the statement of Theorem~\ref{main}, we get $$\gamma_{k+1}= \gamma_ {n+1-(n-k)}\leqslant s_{n-k}^\perp +2.$$ If $C^{\perp}$ is also non-degenerate, we have  $s_{n-k}^\perp=0$, so $\gamma_{k+1} \leqslant 2$.

It is well known that $s_1^\perp \geqslant s_2^\perp \ldots \geqslant s_{n-k}^\perp$ for any linear code (cf.~\cite{W}), and it is clear from the definitions that $\gamma_i \leqslant \gamma_j$ if $i \leqslant j$.  

We observe that for MDS codes, that is, an $[n,k]$ code over $\Fq$ with $d_1=d=n-k+1$, the theorem gives $\gamma_i \leqslant 2,$ for all $i \geqslant k+1$, but it is clear that there exist MDS codes with $n >k$ and individual codewords of Hamming weight $n$, and then $\gamma_i=1$ for all $i$. Hence we do not claim that the bound given in the theorem is sharp, neither just for $i=n$, nor stronger: for all $i=k+1,\ldots,n$, simultaneously.

In~\cite{BS} one finds, to the contrary:  
\begin{theorem}[\cite{BS}, Theorem 17] Let $C$ be a non-trivial $[n,k]$ code over $\Fq$, with $q$ odd, and $d^{\perp}> 3$. Then $(c_n =)\gamma_n \leqslant k - d^{\perp} +2$.
\end{theorem}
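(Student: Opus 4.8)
The plan is to translate the bound, via the $j=n$ case of Proposition~\ref{gamma_i} (in which $n-j=0$), into the geometric statement that $\gamma_n\leqslant k-d^{\perp}+2$ holds as soon as some $(d^{\perp}-2)$-dimensional subspace of $\Fq^{k}$ contains no column of a fixed generator matrix $G$, since then $\gamma_n\leqslant k-(d^{\perp}-2)=k-d^{\perp}+2$. Writing $t:=d^{\perp}$ and taking $G=[I_k\,|\,A]$ as in the proof of Theorem~\ref{main}, I would first recall the fact used there that $d^{\perp}=t$ forces any $t-1$ columns of $G$ to be linearly independent; combined with the Singleton bound $t\leqslant k+1$, this puts $e_1,\dots,e_{t-1}$ among the columns, and I set $W:=\langle e_1,\dots,e_{t-1}\rangle$, a space of dimension $t-1$. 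Because any subspace of $W$ automatically avoids the columns of $G$ lying outside $W$, it suffices to produce a hyperplane of $W$ disjoint from the set $T$ of columns of $G$ that do lie in $W$.

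Since any $t-1$ columns are independent and $\dim W=t-1$, the set $T$ is an arc in $\Proj(W)\cong\Proj^{t-2}$, and every $u\in T$ other than $e_1,\dots,e_{t-1}$ must have all coordinates nonzero in the basis $(e_1,\dots,e_{t-1})$ (otherwise $u$ together with $t-2$ suitable $e_j$ would form $t-1$ dependent columns). Projecting $T$ from $\langle e_1,\dots,e_{t-3}\rangle$ onto a projective line and using the arc property to see that the images of $e_{t-2},e_{t-1}$ and of the points of $T':=T\setminus\{e_1,\dots,e_{t-1}\}$ are pairwise distinct, I get the crude but sufficient bound $|T'|\leqslant q-1$.

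The heart of the argument is to restrict attention to the hyperplanes $V_b=\{x\in W:\sum_j b_jx_j=0\}$ with every $b_j\neq 0$: such a $V_b$ never contains an $e_j$, so only the $\leqslant q-1$ points of $T'$ remain to be dodged. For a fixed $u\in T'$ the substitution $b_j\mapsto b_ju_j$ is a bijection of the torus $(\Fq^{\ast})^{t-1}$ carrying the condition ``$u\in V_b$'' to ``$\sum_j b_j=0$'', so a short character-sum count shows there are exactly $N:=\bigl((q-1)^{t-1}+(-1)^{t-1}(q-1)\bigr)/q$ bad vectors $b$ for this $u$, against a torus of size $(q-1)^{t-1}$. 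A union bound over $T'$ then yields a good $b$ whenever $(q-1)N<(q-1)^{t-1}$, which simplifies to $(-1)^{t-1}(q-1)<(q-1)^{t-2}$; this is precisely where the hypotheses are spent, since $d^{\perp}>3$ gives $t\geqslant 4$ and $q$ odd gives $q\geqslant 3$, so that $(q-1)^{t-2}\geqslant(q-1)^{2}>q-1\geqslant(-1)^{t-1}(q-1)$. The resulting $V_b$ is a $(t-2)$-dimensional subspace of $\Fq^{k}$ meeting no column of $G$, and Proposition~\ref{gamma_i} finishes the proof.

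The step I expect to be the real obstacle is getting honest control on the size of $T$: the union bound has room to spare only when $|T|$ is essentially as small as $q$, and the pleasant point is that the crude projection-to-a-line trick already delivers $|T|\leqslant q+t-2$, which is exactly enough, so one never needs deep arc bounds (Segre-type theorems, the MDS conjecture, etc.). One must still be careful that the projected points really are distinct — this is the one place the arc hypothesis is used — and that the degenerate case $T=\{e_1,\dots,e_{t-1}\}$, handled at once by $V=\{x:\sum_j x_j=0\}$, is not overlooked. A secondary, purely computational hazard is the evaluation of $N$ and the check that the final inequality fails exactly at $q=2$ and at $d^{\perp}=3$, so that neither hypothesis is idle.
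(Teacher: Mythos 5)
Your proposal is correct, but there is nothing in the paper to compare it against: the paper states this result purely as a quotation of Theorem 17 of~\cite{BS} and gives no proof of it, so what you have written is a self-contained proof rather than a variant of the paper's argument. Your proof does, however, use the paper's own dictionary in the right way: by Proposition~\ref{gamma_i} with $j=n$ it suffices to exhibit a $(d^{\perp}-2)$-dimensional subspace of $\mathbb{F}_q^k$ containing no column of $G$ (and $k-d^{\perp}+2\geqslant 1$ by the Singleton bound for $C^{\perp}$). The individual steps check out: with $t=d^{\perp}\geqslant 4$, any $t-1$ columns of $G$ are linearly independent (in particular no two columns are proportional and none is zero), so inside $W=\langle e_1,\dots,e_{t-1}\rangle$ every column $u\in T'=T\setminus\{e_1,\dots,e_{t-1}\}$ has all coordinates nonzero, and the projection onto the last two coordinates maps $T'\cup\{e_{t-2},e_{t-1}\}$ injectively into the $q+1$ points of $\mathbb{P}^1(\mathbb{F}_q)$ (any coincidence would produce $t-1$ dependent columns), giving $|T'|\leqslant q-1$. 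The torus count $N=\bigl((q-1)^{t-1}+(-1)^{t-1}(q-1)\bigr)/q$ is the standard count of ordered sums of $t-1$ nonzero elements equal to $0$, the substitution $b_j\mapsto b_ju_j$ is legitimate exactly because all $u_j\neq 0$, and the union-bound condition $(q-1)N<(q-1)^{t-1}$ does reduce to $(-1)^{t-1}(q-1)<(q-1)^{t-2}$, which holds since $q\geqslant 3$ and $t\geqslant 4$; the resulting $V_b$ avoids all of $T$, hence all columns of $G$. Compared with the paper's technique for its Theorem~\ref{main} (a purely linear-algebraic contradiction that at $i=n$ only yields $k-d^{\perp}+3$), your counting/averaging step over hyperplanes with all coefficients nonzero is what buys the improvement by one, and it makes visible exactly where the hypotheses are spent. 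One minor imprecision in your closing remark: at $q=2$ the inequality $(-1)^{t-1}(q-1)<(q-1)^{t-2}$ fails only when $t$ is odd, so your argument actually also covers binary codes with $d^{\perp}$ even and at least $4$ (consistent with the binary results of~\cite{BS} cited in the paper); this does not affect the correctness of your proof of the stated theorem.
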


One expects the result to be true also for $d^{\perp}=3$ and $q$ odd, with the extra assumption that $C$ is \underline{not} a simplex code. (We recall that a simplex code is the dual of a Hamming code, and that for these codes, $d^{\perp}= 3$.) Also for binary codes this result holds (see~\cite{BS}, Corollary 15, for a statement also involving most cases with  $d^{\perp}=3$ ).

Hence, at least for odd $q$, and $q=2$, there are almost no linear codes with simultaneous sharpness for all $j\in \{k+1,\ldots, n\}$ in Theorem~\ref{main}, and $d^{\perp} \geqslant 3$, since there are almost no codes where the bound is sharp just for $j=n$. On the other hand, as pointed out in~\cite{BS} (see also Corollary~\ref{mds} below):

\begin{proposition}
If an $[n,k]$ code  $C$ over $\Fq$ is a simplex code, or an MDS code having no codewords of Hamming weight $n$, then 
$(c_n =) \gamma_n = k - d^{\perp} +3$.
\end{proposition}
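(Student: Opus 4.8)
The plan is to prove the two cases separately, each time identifying the code parameter that forces $\gamma_n$ down to exactly $k - d^\perp + 3$ rather than anything smaller, and then invoking Theorem~\ref{main} (with $i=n$, i.e.\ Corollary~\ref{main2}) for the upper bound $\gamma_n \leqslant k - d^\perp + 3$. Since the upper bound is already in hand, the real content is the matching lower bound $\gamma_n \geqslant k - d^\perp + 3$, and by Proposition~\ref{gamma_i} this amounts to showing: for $m = k - d^\perp + 2$, every $(k-m) = (d^\perp - 2)$-dimensional subspace of $\Fq^k$ contains at most $n - n = 0$ columns of $G$ is \emph{false} to disprove $\gamma_n \le m$; more precisely, $\gamma_n > m$ means every $(d^\perp-2)$-dimensional subspace of $\Fq^k$ avoids \emph{no} — rather, misses having all columns outside it only if... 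Let me restate cleanly: $\gamma_n \geqslant k-d^\perp+3$ is equivalent, via Proposition~\ref{gamma_i}, to the assertion that no $(d^\perp-2)$-dimensional subspace of $\Fq^k$ contains \emph{zero} columns of $G$, i.e.\ every hyperplane-type subspace of codimension $k - d^\perp + 2$ meets the column set; equivalently every codeword of $C^\perp$ of weight $d^\perp$ spans, together with the column geometry, enough to block a full-weight word. The key translation I would use is the standard fact (already cited in the proof of Theorem~\ref{main}) relating $d^\perp$ to column dependencies of $G$: any $d^\perp - 1$ columns of $G$ have full rank, and some $d^\perp$ columns have rank $d^\perp - 1$.

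For the \textbf{simplex code} case: here $G$ is a $k \times n$ matrix whose $n = (q^k-1)/(q-1)$ columns are representatives of all the distinct points of $\Proj^{k-1}(\Fq)$, and $d^\perp = 3$, so $k - d^\perp + 3 = k$. I must show $\gamma_n = k$, i.e.\ by Proposition~\ref{gamma_i} that every $(k - (k-1)) = 1$-dimensional subspace of $\Fq^k$ contains at most $n-1$ columns (trivially true, any line contains exactly one projective point hence one column, and $1 \le n-1$ since $n \ge k \ge 2$ for non-trivial codes), while no $2$-dimensional subspace avoids all columns — but a $2$-dimensional subspace is a projective line in $\Proj^{k-1}$, which contains $q+1 \ge 3$ of the points, hence at least $q+1 > 0$ columns. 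Wait — I need to go the other direction for the lower bound: $\gamma_n \le m$ would require a $(k-m)$-dimensional subspace containing at most $n-n = 0$... no. Re-reading Proposition~\ref{gamma_i}: $c_j = \gamma_j$ is the smallest $m$ for which there exists a $(k-m)$-dimensional subspace containing at most $n-j$ columns. For $j = n$: smallest $m$ with a $(k-m)$-dimensional subspace containing $0$ columns, i.e.\ disjoint from the column set. So $\gamma_n \ge k - d^\perp + 3$ means: no subspace of dimension $k - (k-d^\perp+2) = d^\perp - 2$ is disjoint from the columns of $G$. For the simplex code, $d^\perp - 2 = 1$: a $1$-dimensional subspace is a single projective point, which \emph{is} one of the columns, hence never disjoint. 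Thus $\gamma_n > k-1 = k - d^\perp + 2$, giving $\gamma_n \ge k - d^\perp + 3$, and combined with Corollary~\ref{main2} we get equality. This case is essentially immediate once the dictionary is set up.

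For the \textbf{MDS code with no weight-$n$ codeword} case: MDS means $d_1 = d = n-k+1$, and by duality $C^\perp$ is also MDS with $d^\perp = k+1$, so $k - d^\perp + 3 = 2$. I must show $\gamma_n = 2$. The upper bound $\gamma_n \le 2$ is Corollary~\ref{main2}; for the lower bound I must show $\gamma_n \ne 1$, i.e.\ $\gamma_n > 1$, which by Proposition~\ref{gamma_i} means no $(k-1)$-dimensional subspace (hyperplane) of $\Fq^k$ is disjoint from the columns of $G$. But a codeword of $C$ of weight $n$ corresponds exactly to a choice of coefficients making no column lie in the kernel hyperplane — equivalently, $\gamma_n = 1$ iff $C$ has a codeword of full support (this is precisely Corollary~\ref{keycor} / the definition of $\gamma_n$ via $c_n$). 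Since by hypothesis $C$ has no codeword of Hamming weight $n$, we get $\gamma_n \ge 2$, hence $\gamma_n = 2 = k - d^\perp + 3$.

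The main obstacle — which on reflection is quite mild — is making sure the index bookkeeping in Proposition~\ref{gamma_i} is applied with the correct duality: one must consistently remember that the statement there uses a generator matrix $G$ of $C$ and the parameter $d^\perp$ refers to $C^\perp$, and that ``no $(d^\perp-2)$-dimensional subspace disjoint from the columns of $G$'' is the right reformulation of the lower bound $\gamma_n \ge k - d^\perp + 3$; after that the simplex computation is just the count of points on a projective line, and the MDS computation is just the contrapositive of the hypothesis together with the duality $d^\perp = k+1$ for MDS codes. A secondary point worth stating explicitly is why $C^\perp$ is non-degenerate (so that $d^\perp \geqslant 2$ and the indices make sense) in each case: for the simplex code this is classical, and for an MDS code with $n > k$ every coordinate is nonzero in some codeword.
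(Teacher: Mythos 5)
Your argument is correct: the upper bound $\gamma_n\leqslant k-d^{\perp}+3$ from Corollary~\ref{main2}, plus the lower bound obtained by reading Proposition~\ref{gamma_i} at $j=n$ (namely, $\gamma_n\geqslant k-d^{\perp}+3$ iff no $(d^{\perp}-2)$-dimensional subspace of $\Fq^k$ misses all columns of $G$), does establish both cases, and the two verifications (every $1$-dimensional subspace of $\Fq^k$ contains a column of a simplex generator matrix since the columns exhaust $\Proj^{k-1}(\Fq)$; and $\gamma_n=1$ iff there is a full-support codeword, excluded by hypothesis, with $d^{\perp}=k+1$ by MDS duality) are sound. Note, though, that the paper itself gives no proof here -- it attributes the proposition to~\cite{BS} -- and only later recovers the two halves by different means: the MDS half via Proposition~\ref{hei} and Corollary~\ref{mds} (whose logic, "dual is $1$-MDS so the bound is $2$, and no weight-$n$ word forces $c_n\geqslant 2$", essentially coincides with yours), and the simplex half via Lemma~\ref{lem:klove92} and Proposition~\ref{smallt}, which rely on the known support-weight distribution of subcodes of the simplex code ($A_i^{(r)}$ and $d_r=\mu_k-\mu_{k-r}$). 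Your simplex argument is more elementary and self-contained, trading that enumerative input for the projective-geometric reading of Proposition~\ref{gamma_i}; what the paper's route buys instead is the finer information of Proposition~\ref{smallt} (the exact values of $t_i$ for all $i$, not just sharpness at $i=n$). Two small points you should state explicitly to invoke Theorem~\ref{main}/Corollary~\ref{main2} cleanly: $k\geqslant 2$ for the simplex case (so that the dual Hamming code is nonzero and $d^{\perp}=3$ makes sense, and $n>k$), and in the MDS case $n>k$, which is automatic since an MDS code with $n=k$ is all of $\Fq^n$ and contains the all-ones word of weight $n$. Also, the first paragraph of your write-up contains a garbled false start before the correct reformulation; the clean restatement that follows is the one to keep.
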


On the other hand again, if $C$ is an MDS code having no codewords of weight $n$, it is easily shown, as in~\cite{BS}, that  $\gamma_{n-1}=1$, so the only possibility for simultaneous sharpness for all $j \geqslant k+1$ is $k+1=n$, in other words an $[n,n-1]$ code. We will now, after a suitable definition valid for linear codes in general, take a closer look at simplex codes. We will study these codes in particular,  since they serve as a demonstration of how good the bound in Theorem~\ref{main} after all is. 
   
\begin{definition}
For a non-degenerate $[n,k]$ code $C$ over $\Fq$ and $i=k+1,\ldots, n$, we set $t_i=t_i(C)=  s_{n+1-i}^\perp+2 - \gamma_i. $

\end{definition}

One understands that $t_i(C)$ measures the lack of sharpness of the bound in Theorem~\ref{main} for each $i$.

Let ${\cal H}^{\perp}$ be a $[(q^k-1)/(q-1), k, q^{k-1}]$ simplex code over ${\mathbb F}_q$. We set $\mu_s:=(q^s-1)/(q-1)$. Then, for each $r = 1,\ldots,k$, the number of $r$-dimensional subcodes of ${\cal H}^{\perp}$ of support weight $i$ is given by \[    A_{i}^{(r)}({\cal H}^{\perp})  = \begin{cases}     {k \brack r}_q & ,\, i = \mu_k-\mu_{k-r};\\       0            & ,\, \text{\rm otherwise}    \end{cases}\] where ${k \brack r}_q$ denotes the Gaussian binomial coefficient (cf.~\cite{Kl}). In particular, all the subcodes of dimension $r$ have support weight $d_r=\mu_k-\mu_{k-r}$. Therefore we can easily derive the following result.

\begin{lemma}\label{lem:klove92}
For any $r$ such that $1\leqslant r \leqslant k$, $\gamma_i=r$ if and only if $\mu_{k}-\mu_{k-r+1}< i \leqslant \mu_{k}-\mu_{k-r}$.
\end{lemma}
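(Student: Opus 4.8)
The plan is to read off $\gamma_i$ directly from the distribution of support weights of subcodes of the simplex code $\mathcal{H}^\perp$. By Proposition~\ref{equal}, $\gamma_i$ is the smallest dimension $m$ of a subcode of $\mathcal{H}^\perp$ with support weight at least $i$, so the entire statement is a bookkeeping exercise once we know, for each dimension $r$, which support weights are attained. The crucial input is the displayed formula for $A_i^{(r)}(\mathcal{H}^\perp)$ just stated: every $r$-dimensional subcode has support weight \emph{exactly} $\mu_k - \mu_{k-r}$, and this value is strictly increasing in $r$ (since $\mu_{k-r}$ is strictly decreasing in $r$).

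First I would record the monotonicity: the map $r \mapsto d_r = \mu_k - \mu_{k-r}$ is strictly increasing on $\{0,1,\dots,k\}$, with $d_0 = 0$ and $d_k = \mu_k$. Hence the values $0 = d_0 < d_1 < \cdots < d_k = \mu_k$ partition the integers $1,\dots,\mu_k$ into the half-open blocks $(d_{r-1}, d_r]$ for $r = 1,\dots,k$. Next I would argue the two directions. For the "if" direction: if $d_{r-1} < i \leqslant d_r$, then any subcode of dimension $\leqslant r-1$ has support weight $\leqslant d_{r-1} < i$, so $\gamma_i \geqslant r$; and the $r$-dimensional subcode of support weight $d_r \geqslant i$ witnesses $\gamma_i \leqslant r$, giving $\gamma_i = r$. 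For the "only if" direction: if $\gamma_i = r$, then no subcode of dimension $r-1$ has support weight $\geqslant i$, i.e. $d_{r-1} < i$, and some $r$-dimensional subcode does, i.e. $i \leqslant d_r$; but $d_{r-1} = \mu_k - \mu_{k-r+1}$ and $d_r = \mu_k - \mu_{k-r}$, which is exactly the claimed inequality.

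I do not anticipate a genuine obstacle here; the only point requiring a word of care is that the formula for $A_i^{(r)}$ must be invoked for \emph{all} dimensions $\leqslant r$ at once (not just dimension $r$) to conclude that support weights strictly below $i$ force dimension strictly below $r$, and one should note the degenerate endpoints $d_0 = 0$ (so the case $r = 1$, $0 < i \leqslant d_1$, is covered) and the fact that $i$ ranges over $1 \leqslant i \leqslant n = \mu_k$ so the blocks exhaust all relevant $i$. With these observations the proof is a short chain of equivalences using Proposition~\ref{equal} and the stated subcode-count formula.
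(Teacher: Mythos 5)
Your proof is correct and follows essentially the same route as the paper: it reads off from the stated formula for $A_i^{(r)}(\mathcal{H}^\perp)$ that every $r$-dimensional subcode has support weight exactly $\mu_k-\mu_{k-r}$, and then locates $i$ in the resulting partition into blocks $(\mu_k-\mu_{k-r+1},\,\mu_k-\mu_{k-r}]$, which is precisely the paper's (more tersely stated) argument. The only cosmetic remark is that the characterization of $\gamma_i$ as the smallest dimension of a subcode with support weight at least $i$ is just its definition, not Proposition~\ref{equal}.
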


\begin{proof} Indeed, if $\mu_{k}-\mu_{k-r+1}< i$, then there is no subcode of dimension at most $r-1$ with weight support at least $i$, while for $i \leqslant \mu_{k}-\mu_{k-r}$, all subcodes of dimension $r$ have weight support at least $i$.
\qed\end{proof}

Moreover the weight hierarchy of a Hamming $[n=(q^k-1)/(q-1), n-k, 3]$ code ${\cal H}$ over ${\mathbb F}_q$ is determined as follows (cf.~\cite{W}):\begin{eqnarray*}\{d_r({\cal H})\: : \: 1\leqslant r \leqslant n-k\}&=&\{1,2,\ldots,n\} \setminus \{n+1-(\mu_k - \mu_{k-i})\: : \: 1 \leqslant i \leqslant k\} \\&=&\{1,2,\ldots,n\}\setminus \{\mu_{k-i}+1 \: : \: 1\leqslant i \leqslant k\}.\end{eqnarray*}Then it follows that:

\begin{lemma}\label{lem:wei91}
Let $1\leqslant r \leqslant n-k$. Then there exists a unique $2 \leqslant j \leqslant k$ such that $\mu_{j-1}-j+2 \leqslant r \leqslant \mu_j-j$. Moreover, \[d_r({\cal H}) = r+j.\]
\end{lemma}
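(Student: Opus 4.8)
The plan is to deduce the statement directly from the description of the weight hierarchy of ${\cal H}$ recalled immediately above. Write $n=\mu_k$ and, for $0\leqslant l\leqslant k-1$, set $m_{l+1}:=\mu_l+1$; then the displayed formula above says that the set of generalized Hamming weights of ${\cal H}$ is exactly $\{1,2,\ldots,n\}\setminus\{m_1,\ldots,m_k\}$, and since $\mu_0<\mu_1<\cdots<\mu_{k-1}$ we have $m_1<m_2<\cdots<m_k$. As the $d_r({\cal H})$ are strictly increasing in $r$ (cf.~\cite{W}), $d_r({\cal H})$ is simply the $r$-th smallest element of this complement, so the whole lemma reduces to locating the $r$-th surviving integer.

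Next I would examine how the surviving integers are distributed between the deleted ones. Fix $j$ with $2\leqslant j\leqslant k$ and consider the integers strictly between $m_j=\mu_{j-1}+1$ and $m_{j+1}=\mu_j+1$ (using the convention $m_{k+1}:=n+1$, consistent with the rule $m_{l+1}=\mu_l+1$): these are $\mu_{j-1}+2,\mu_{j-1}+3,\ldots,\mu_j$, none of them deleted, and there are $\mu_j-\mu_{j-1}-1=q^{j-1}-1$ of them. Exactly $j$ of the deleted values, namely $m_1,\ldots,m_j$, are $\leqslant m_j$, so for a surviving integer $v$ with $m_j<v<m_{j+1}$ precisely $j$ deleted values lie below $v$; hence $v$ has $v-1-j$ surviving integers below it, i.e.\ $v=d_{v-j}({\cal H})$. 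Letting $v$ range over $\mu_{j-1}+2,\ldots,\mu_j$, the index $v-j$ ranges over the integers $r$ with $\mu_{j-1}-j+2\leqslant r\leqslant \mu_j-j$, and there $d_r({\cal H})=v=r+j$.

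It then remains to verify that these ranges, for $j=2,\ldots,k$, partition $\{1,2,\ldots,n-k\}$, which yields the asserted existence and uniqueness of $j$. The required identities are elementary: for $j=2$ the lower bound is $\mu_1-2+2=\mu_1=1$; for $j=k$ the upper bound is $\mu_k-k=n-k$; the upper bound $\mu_j-j$ of the $j$-th range is exactly one less than the lower bound $\mu_j-(j+1)+2$ of the $(j+1)$-st; and each range is nonempty since $\mu_j-\mu_{j-1}-1=q^{j-1}-1\geqslant 1$ for $j\geqslant 2$. Combining this with the previous paragraph gives $d_r({\cal H})=r+j$ for the unique admissible $j$.

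The argument is pure bookkeeping organised around the index shift $l\mapsto\mu_l+1$, so I do not anticipate a genuine obstacle; the only points demanding care are the boundary cases $j=2$ and $j=k$ (and the degenerate case $q=2$, $j=2$, where the range $\mu_{j-1}-j+2\leqslant r\leqslant\mu_j-j$ collapses to a single value), together with making explicit that ``$d_r$ is the $r$-th smallest surviving integer'' relies on the strict monotonicity of the weight hierarchy.
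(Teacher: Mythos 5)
Your argument is correct, and it is essentially the paper's intended one: the paper's proof simply asserts that both parts follow directly from the displayed description of $\{d_r({\cal H})\}$, and your counting of deleted values $\mu_l+1$ versus surviving integers, together with the strict monotonicity of the $d_r$, is exactly the bookkeeping being left implicit there. The boundary checks ($j=2$, $j=k$, adjacency and nonemptiness of the ranges) are the right details to make explicit, and they all check out.
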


\begin{proof}The first part is obvious, while the second part is a direct consequence of the description of the set $\{d_r({\cal H})\: : \: 1\leqslant r \leqslant n-k\}$ above. \qed \end{proof}

\begin{proposition} \label{smallt}
Let ${\cal H}^{\perp}$ be a $[(q^k-1)/(q-1), k, q^{k-1}]$ simplex code over ${\mathbb F}_q$. For each $r$, $1\leqslant r \leqslant k$, if $i=\mu_k-\mu_{k-r+1}+l$, $l=1,2,\ldots,q^{k-r}$, then \[t_i=\begin{cases} 0 & (l\leqslant k-r+1)\\ 1 & (k-r+2 \leqslant l). \end{cases}\]
\end{proposition}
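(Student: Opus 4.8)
The plan is to unwind the definition $t_i = s^{\perp}_{n+1-i} + 2 - \gamma_i$ by reading $\gamma_i$ off Lemma~\ref{lem:klove92} and the generalized Hamming weight $d^{\perp}_{n+1-i}$ off Lemma~\ref{lem:wei91}. The point is that the dual of the simplex code ${\cal H}^{\perp}$ is the Hamming code ${\cal H}$, and that $n = \mu_k$, so $d^{\perp}_{n+1-i} = d_{n+1-i}({\cal H})$ and $s^{\perp}_{n+1-i} = k + (n+1-i) - d_{n+1-i}({\cal H})$.

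Fix $r$ and write $i = \mu_k - \mu_{k-r+1} + l$ with $1 \leqslant l \leqslant q^{k-r}$. Since $\mu_{k-r+1} - \mu_{k-r} = q^{k-r}$, these are exactly the indices with $\mu_k - \mu_{k-r+1} < i \leqslant \mu_k - \mu_{k-r}$, so Lemma~\ref{lem:klove92} gives $\gamma_i = r$. Now put $r' := n + 1 - i = \mu_{k-r+1} - l + 1$; as $l$ runs through $1, \ldots, q^{k-r}$ the quantity $r'$ runs through $\mu_{k-r}+1, \ldots, \mu_{k-r+1}$, which is a valid index for the weight hierarchy of ${\cal H}$ precisely when $i \geqslant k+1$. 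Applying Lemma~\ref{lem:wei91} to $r'$ yields a unique $j$ with $\mu_{j-1} - j + 2 \leqslant r' \leqslant \mu_j - j$ and $d_{r'}({\cal H}) = r' + j$, whence $s^{\perp}_{n+1-i} = k + r' - (r'+j) = k - j$ and $t_i = k - j + 2 - r$. So the whole statement reduces to showing that $j = k-r+2$ when $l \leqslant k-r+1$ and $j = k-r+1$ when $l \geqslant k-r+2$ (which, via $t_i = k-j+2-r$, are precisely the cases $t_i = 0$ and $t_i = 1$).

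I would settle this by substituting $j = k-r+2$ (resp.\ $j = k-r+1$) together with $r' = \mu_{k-r+1} - l + 1$ into the inequalities $\mu_{j-1} - j + 2 \leqslant r' \leqslant \mu_j - j$ of Lemma~\ref{lem:wei91}: the four resulting comparisons reduce, one by one, either to one of the bounds $1 \leqslant l$, $l \leqslant q^{k-r}$, $l \leqslant k-r+1$, $l \geqslant k-r+2$, or to the elementary estimate $q^m \geqslant m+1$ for integers $m \geqslant 1$, $q \geqslant 2$ (used with $m = k-r+1$ and $m = k-r+2$ to bound the gaps $\mu_{a+1} - \mu_a = q^a$ against the linear $-j+2$ shifts), and uniqueness of $j$ in Lemma~\ref{lem:wei91} then forces $j$ to be the value found. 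One degenerate range needs separate attention: when $r = 1$ and $l \leqslant k$ (i.e.\ $i \leqslant k$), the weight hierarchy of ${\cal H}$ does not reach down to $r' = n+1-i$, so one instead uses the convention $s^{\perp}_{n+1-i} = -1$ of Remark~\ref{bad_indices} together with $\gamma_i = 1$ to get $t_i = 0$, which agrees with the claim since $l \leqslant k = k-r+1$. (For $r \geqslant 2$ one always has $i \geqslant k+1$, so no such issue arises.)

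The main obstacle is the reduction in the third paragraph: keeping the interval arithmetic of Lemma~\ref{lem:wei91} straight and, in particular, checking that the boundary value $l = k-r+1$ is attributed to $j = k-r+2$ and not to $j = k-r+1$. A secondary annoyance is the $r = 1$ edge case, where the "formula value" $j = k-r+2$ would leave the range $\{2,\ldots,k\}$ of Lemma~\ref{lem:wei91} and must be replaced by the argument through the convention.
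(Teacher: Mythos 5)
Your proposal is correct and follows essentially the same route as the paper: read $\gamma_i=r$ from Lemma~\ref{lem:klove92}, set $s=n+1-i=\mu_{k-r+1}+1-l$, identify via the interval conditions of Lemma~\ref{lem:wei91} that $d_s({\cal H})=s+(k-r+2)$ when $l\leqslant k-r+1$ and $d_s({\cal H})=s+(k-r+1)$ when $l\geqslant k-r+2$, and treat the $r=1$, $i\leqslant k$ range by the convention of Remark~\ref{bad_indices}. Your write-up merely makes the interval verifications (reduction to $l$-bounds and $q^m\geqslant m+1$) more explicit than the paper's ``similarly''.
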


\begin{proof}
From Lemma~\ref{lem:klove92}, we have that $\gamma_i=r$. We set $$s:=n+1-i=\mu_k+1-(\mu_k-\mu_{k-r+1}+l)=\mu_{k-r+1}+1-l.$$ In the case of $r >1$ and $l\leqslant k-r+1$, it follows that $$ \mu_{k-r+1}-(k-r+2)+2 \leqslant s (\leqslant \mu_{k-r+2}-{k-r+2}),$$ and so $d_s({\cal H})=s+(k-r+2)$ by Lemma~\ref{lem:wei91}. Therefore we have that \begin{eqnarray*}t_i&=&(k+(n+1-i)-d_s({\cal H})+2)-\gamma_i\\&=& k+\mu_k+1-(\mu_k-\mu_{k-r+1}+l)-(\mu_{k-r+1}+1-l+k-r+2)+2-r\\&=&0.\end{eqnarray*} When $r=1$ and $l\leqslant k-r+1$, it follows from our convention in Remark~\ref{bad_indices} that in that case too, \[t_i = s_{n+1-i}^\perp + 2 - \gamma_i = -1 + 2 -1 = 0.\]

Similarly, in the case of $k-r+2 \leqslant l$, it follows that $d_s({\cal H})=s+(k-r+1)$. Thus we have that \begin{eqnarray*}t_i&=&(k+(n+1-i)-d_s({\cal H})+2)-\gamma_i\\&=& k+\mu_k+1-(\mu_k-\mu_{k-r+1}+l)-(\mu_{k-r+1}+1-l+k-r+1)+2-r\\&=&1.\end{eqnarray*} The proposition follows.
\qed\end{proof}

\begin{example} \label{Hamming}
As an illustration, let us study the $[(q^k-1)/(q-1), k, q^{k-1}]$ simplex code for $q=2$ and $k=4$. Here $n=15, k=4$, and $(d_1,d_2,d_3,d_4)=(8,12,14,15)$. This is a constant weight code, and then $\gamma_i= \min \{j | d_j \geqslant i \}$ for all $i$. 
The dual code has the following weight hierarchy:  \[(d_1^{\perp},\ldots,d_{11}^{\perp})=(3,5,6,7,9, 10,11,12,13,14,15).\]
We obtain the following table, where we only list the interesting values $i=k+1=5, \ldots, n=15$. (Trivially $\gamma_i=1$ for $i=1,\ldots,4$ and strictly speaking,  $s_{n+1-i}^\perp$ is not defined for these values of $i$. See Remark~\ref{bad_indices}.)

\begin{tabular}{|c||c|c|c|c|c|c|c|c|c|c|c|}
\hline
$i$ &   $5$ & $6$ & $7$ & $8$ & $9$ & $10$ & $11$ & $12$ & $13$ & $14$ & $15$ \\ \hline
$s_{16-i}^\perp+2$&   2 & 2 & 2 & 2 & 2 & 2 & 2 & 3 & 3 & 3 & 4 \\ \hline
$c_i=\gamma_i$ &   1 & 1 & 1 & 1 & 2 & 2 & 2 & 2 & 3 & 3 & 4 \\ \hline
$t_i$ &  1 & 1 & 1 & 1 & 0 & 0 & 0 & 1 & 0 & 0 & 0 \\ \hline
\end{tabular} 
\end{example}

\begin{remark} \label{smalltalk}
We observe that even if our bound is not sharp in the example above, nor in any other simplex codes  (Proposition~\ref{smallt}), it is reasonably good; the $t_i$ are $0$ or $1$, and $0$  "quite often". 

But as one sees from  Proposition~\ref{smallt}, for all non-trivial cases of simplex codes there will be \underline{some} $i$, with $k+1 \leqslant i <n$ with $t_i =1$, (see the case $i=12$ in Example~\ref{Hamming} above), so the generalization of Kung's bound in Theorem~\ref{main} will never be sharp for all $j$ in the range $[k+1,n]$, even for simplex codes.
  
On the other hand, we see that $t_{n+1-j}=0$, so that our bound is sharp, for all $j=1,q,q+1,q^2+q-1,q^2+q,q^2+q+1, q^3+q^2+q-2, q^3+q^2+q-1,q^3+q^2+q,q^3+q^2+q+1, \ldots,$ with $j \leqslant n-k.$ The smallest integer $j$ not of this form, for any prime power $q$, is $22$. For MDS codes, it is trivially true that the $t_i$ are  $0$ or $1$ since all $s^\perp_i$ in question are zero by definition, for $i=1,\ldots,n-k,$ and  the corresponding $\gamma_{n+1-i}$ are $1$ or $2$.
\end{remark}

\begin{example} \label{ReedMuller}
Quite "often" a linear code has a codeword of weight $n$, and then $c_i=1$ for all $i$, and then there is of course no need to apply the bound in our Theorem~\ref{main} to determine the $c_i$. (See~\cite{BB} where the authors discuss when a linear code might have a codeword of maximum weight.) So in such cases that bound is expected to be very unsharp, and our $t_i$ fairly big. As a contrast to the examples above, where we looked at codes that are more "sparse" when it comes to non-zero elements, even for those codewords that have largest weight, we here give an  example of a class of codes with codewords of weight $n$.

Let $C$ be the Reed-Muller code of first order and rank $r$ over $\Fq$. This is a $[q^{r-1},r,q^{r-2}(q-1)]$ linear code. These codes are often given as examples alongside with simplex codes in articles and textbooks. But in the context we discuss here these classes are opposite extremes. The word $(1, \ldots,1)$ is a codeword of the Reed-Muller code, and therefore $c_i = 1,$ for $ 1 \leqslant i \leqslant n=q^{r-1},$ for all $r$. The weight hierarchy of $C$ is given by \[d_i = \left \{ \begin{array}{ll} q^{r-1}-q^{r-i-1} & \mathrm{ if }\: 1 \leqslant i \leqslant r-1 \\ q^{r-1} & \mathrm{ if }\: i=r \end{array}\right.\] By Wei duality, we can therefore find the weight hierarchy of the dual code. Namely, \begin{eqnarray*}(d^\perp_1,\ldots,d^\perp_{q^{r-1}-r}) &=& \{1,\ldots,q^{r-1}\} \setminus\{q^{r-1}+1-d_i\: : \: 1 \leqslant i \leqslant r\} \\ &=& \{2,\ldots,q^{r-1}\} \setminus(\{q^{r-i-1}+1\: : \: 1 \leqslant i \leqslant r-1\} \end{eqnarray*} From there, we get that \[s^\perp_{n+1-j} = k+n+1-j-d^\perp_{n+1-j}= i-1 \Leftrightarrow q^{r-1}-q^{r-i}+r-i+2 \leqslant j \leqslant q^{r-1}-q^{r-i-1}+r-i.\] We can then evaluate the sharpness of our bound and we get that \[t_j = i \Leftrightarrow q^{r-1}-q^{r-i}+r-i+2 \leqslant j \leqslant q^{r-1}-q^{r-i-1}+r-i.\] In particular \[t_{q^{r-1}} = r-1.\]
\end{example}

An $[n,k]$ code $C$ over ${\mathbb F}_q$ is called an $r$-{\it{MDS code}} iff it attains the generalized Singleton bound $d_r = n-k+r$ from Lemma~\ref{singleton} for $i=r$. Note that if a code is $r$-MDS, then it also $i$-MDS for all $r<i$, and that a code is MDS if and only if it is $1$-MDS. We denote the maximum Hamming weight of a single non-zero codeword in $C$ by $d_{\mbox{\rm max}}$.

\begin{proposition} \label{hei} Let $C$ be an $[n,k]$ code over ${\mathbb F}_q$. If $C^{\perp}$ is an $(n+1-i)$-MDS code and $d_{\mbox{\rm max}}<i$, then $C$ attains the bound for $c_i$ in Theorem $\ref{main}$.
\end{proposition}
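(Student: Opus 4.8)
The plan is to show that under the stated hypotheses the upper bound $c_i \leqslant s_{n+1-i}^\perp + 2$ from Theorem~\ref{main} is actually attained, by producing a matching lower bound $c_i \geqslant s_{n+1-i}^\perp + 2$. First I would simplify the right-hand side: since $C^\perp$ is $(n+1-i)$-MDS, we have $d_{n+1-i}^\perp = (n-k) - k$... more carefully, $d_{n+1-i}^\perp = \dim C^\perp - \text{(length)} + \ldots$; in the established notation $s_{n+1-i}^\perp = k+(n+1-i) - d_{n+1-i}^\perp$, and the $r$-MDS condition for $C^\perp$ with $r = n+1-i$ forces $d_{n+1-i}^\perp$ to be maximal, hence $s_{n+1-i}^\perp = 0$. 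So the claim reduces to showing $c_i = \gamma_i = 2$, i.e. that $\gamma_i \neq 1$ while Theorem~\ref{main} already gives $\gamma_i \leqslant 2$.

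Next I would use the characterization $\gamma_i = 1$ iff there is a one-dimensional subspace (equivalently a single nonzero codeword) of support weight at least $i$, i.e. iff $d_{\mathrm{max}} \geqslant i$. This is precisely excluded by the hypothesis $d_{\mathrm{max}} < i$. Hence $\gamma_i \geqslant 2$, and combined with the upper bound $\gamma_i \leqslant s_{n+1-i}^\perp + 2 = 2$ from Theorem~\ref{main} we conclude $c_i = \gamma_i = 2 = s_{n+1-i}^\perp + 2$, which is exactly the assertion that the bound is attained.

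The one point requiring genuine care — and the step I expect to be the main obstacle — is verifying that the $(n+1-i)$-MDS hypothesis on $C^\perp$ really does give $s_{n+1-i}^\perp = 0$, i.e. that the index $n+1-i$ is in the valid range $1 \leqslant n+1-i \leqslant n - k$ for the $i$-th Singleton defect of $C^\perp$ to be defined in the usual (non-negative) sense. Since $i$ ranges over $k+1, \ldots, n$, we get $1 \leqslant n+1-i \leqslant n-k$, so $r = n+1-i$ is a legitimate index for a generalized Hamming weight of the $[n, n-k]$ code $C^\perp$, and the generalized Singleton bound (Lemma~\ref{singleton} applied to $C^\perp$) together with the $r$-MDS equality gives $d_r^\perp = (n-k-... )$; one must just check the arithmetic $s_r^\perp = (n-k) + ... $ collapses to $0$. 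Once this bookkeeping is done the rest is immediate, so the proposition follows.
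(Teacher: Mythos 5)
Your proposal is correct and follows essentially the same route as the paper: the $(n+1-i)$-MDS hypothesis on $C^\perp$ gives $d_{n+1-i}^\perp = k+(n+1-i)$, hence $s_{n+1-i}^\perp=0$ and the upper bound $c_i\leqslant 2$, while $d_{\mbox{\rm max}}<i$ rules out $\gamma_i=1$, forcing $c_i=\gamma_i=2$. The arithmetic you flagged as the remaining worry does collapse as you expect, since $i\in\{k+1,\ldots,n\}$ makes $n+1-i$ a legitimate index for the $[n,n-k]$ code $C^\perp$ and the generalized Singleton bound for $C^\perp$ reads $d_r^\perp\leqslant k+r$.
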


\begin{proof}
Since $C^{\perp}$ is an $(n+1-i)$-MDS code, we have that the bound in Theorem~\ref{main} is: $$ c_i\leqslant (k+n+1-i-d_{n+1-i}^{\perp})+2=2. $$ Moreover $C$ does not contain any codeword whose Hamming weight is greater than $i-1$. Thus we find that $c_i \geqslant 2 $ also, and so the proposition holds  (with bound $2$ in this case).
\qed\end{proof}

For the special case $d=n-k+1$ and  $i=n$, we obtain:
\begin{corollary} \label{mds}
Let $C$ be an MDS code over ${\mathbb F}_q$. If $d_{\mbox{\rm max}}<n$, then the bound in Theorem $\ref{main}$ is sharp for $i=n$ (with $c_n=\gamma_n=2$).
\end{corollary}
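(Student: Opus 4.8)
The plan is to obtain this as an immediate specialization of Proposition~\ref{hei} to the index $i = n$. The only ingredient not already in the excerpt is the classical fact that the dual of an MDS code is again an MDS code; granting this, $C^\perp$ is an $[n, n-k]$ MDS code, hence a $1$-MDS code, and since $n + 1 - i = 1$ when $i = n$, the first hypothesis of Proposition~\ref{hei}, that $C^\perp$ be an $(n+1-i)$-MDS code, holds. The second hypothesis of Proposition~\ref{hei}, namely $d_{\mbox{\rm max}} < i$, is exactly the standing assumption $d_{\mbox{\rm max}} < n$ of the corollary.

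With both hypotheses verified, Proposition~\ref{hei} tells us that $C$ attains the bound for $c_n$ in Theorem~\ref{main}, and inspecting its proof identifies that bound with $2$: for the $[n, n-k]$ MDS code $C^\perp$ one has $d_1^\perp = n - (n-k) + 1 = k + 1$, so $s_1^\perp = k + 1 - d_1^\perp = 0$ and the bound of Theorem~\ref{main} for $i = n$ reads $c_n = \gamma_n \leqslant s_1^\perp + 2 = 2$. On the other hand, $d_{\mbox{\rm max}} < n$ means $C$ has no codeword of full support, so no $1$-dimensional subcode of $C$ has support weight $n$, forcing $\gamma_n \geqslant 2$. Combining with Proposition~\ref{equal} gives $c_n = \gamma_n = 2$, as claimed.

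I do not anticipate any real obstacle here, as the statement is a transparent corollary; the one point worth stating carefully is the invocation of the MDS--dual duality, which is what makes the index $n+1-i$ land precisely on the value $1$ and thereby matches the hypothesis "$C^\perp$ is $(n+1-i)$-MDS" in Proposition~\ref{hei}.
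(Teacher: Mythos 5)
Your proof is correct and follows the same route as the paper, which obtains the corollary precisely as the specialization $i=n$, $d=n-k+1$ of Proposition~\ref{hei}; your explicit appeal to MDS duality to see that $C^\perp$ is $1$-MDS, and the computation $s_1^\perp=0$ giving the bound $2$, are exactly the details the paper leaves implicit.
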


\begin{example} \label{sparse}
Another example of sharpness and the lack of it is provided a binary code given by the parity check matrix \[H=\begin{bmatrix}1 & 1 & 1 & 0 & 0 & 0 & 0 & 0 & 0 \\
0 & 0 & 0 & 1 & 1 & 1 & 0 & 0 & 0 \\
0 & 0 & 0 & 0 & 0 & 0 & 1 & 1 & 1\end{bmatrix}.\]Here $n=9$, and $k=6$, and the $d_j^{\perp}$ are $3,6,9$, so that $s_1^{\perp}=4,$ $s_2^{\perp}=2$, and $s_3^{\perp}=0$. Moreover we have $c_7=c_8=c_9=2$ since $d_{\mbox{\rm max}}=6$ (and one easily finds two words whose support union is $\{1,2,\ldots,9\}$). The upper bounds for the $c_i$ given by Theorem~\ref{main} are $\{2,4,6\}.$  So this is sharp for $c_i$ only when $i=7$ (in the "proper" range $[k+1,n]=[7,9]$). We observe that Proposition~\ref{hei} gives sharpness for $i=7$ since  $C^{\perp}$ is $n+1-i=9+1-7=3$-MDS, and $d_{\mbox{\rm max}} < 7.$

More generally, let $H$ be a $m \times 3m$ parity check matrix with $3$ consecutive $1$'s grouped together "on the diagonal" in an analogous pattern. It defines a $[3m,2m]$ binary code $C$. It is easy to see that $d_{\mbox{\rm max}}=2m$ and that we can cover the whole space with two codewords. Thus \[c_{2m+1}=\ldots=c_{3m}=2.\] We also have that the weight hierarchy of $C^\perp$ is \[(d^\perp_1,\ldots,d^\perp_m)=(3,6,\ldots,3m)\] and that \[(s^\perp_1,\ldots,s^\perp_m)=(2m-2,2m-4,\ldots,0).\] Then the bound is sharp for $i=2m+1$ since \[c_{2m+1}=2 = s^\perp_{3m+1-(2m+1)} + 2,\] while it is non-sharp for $i\geqslant 2m+2$. The sharpness of the bound for $i=2m+1$ can be derived from Proposition~\ref{hei} since $2m=d_{\mbox{\rm max}} < 2m+1$ and $C$ is $(3m+1-(2m+1))$-MDS.\\

There is an obvious further generalization of this example to the case of $H$ with $m$ rows with $2l+1$  consecutive $1$'s grouped  together ``on the diagonal'' in an analogous pattern, for any natural number $l$.
\end{example} 

In Corollary~\ref{mds} and Example~\ref{sparse} we show sharpness of the bound in Theorem $\ref{main}$  for the "extreme cases" $i=n$ or $i=k+1$. Using the following corollary of Proposition~\ref{hei} we finish with an example where we obtain sharpness  for some intermediate values of $i$ in many cases.

\begin{corollary} \label{smart}
Let $C$ be an $[n,k,d]$ code over ${\mathbb F}_q$. If $d_{\mbox{\rm max}}<k+d-1$, then  the bound in Theorem $\ref{main}$ is sharp for $i=k+d-1$.
\end{corollary}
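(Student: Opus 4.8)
The plan is to deduce Corollary~\ref{smart} directly from Proposition~\ref{hei}, applied to the index $i=k+d-1$. First I would record that this $i$ lies in the admissible range: the Singleton bound $d\leqslant n-k+1$ gives $i\leqslant n$, and since the sum of the rows of a standard-form generator matrix is a codeword of Hamming weight at least $k$, we always have $d_{\mbox{\rm max}}\geqslant k$; hence the hypothesis $d_{\mbox{\rm max}}<k+d-1$ forces $d\geqslant 2$, i.e. $i\geqslant k+1$. So Theorem~\ref{main} and Proposition~\ref{hei} both apply to this $i$.

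To invoke Proposition~\ref{hei} I need two hypotheses: $d_{\mbox{\rm max}}<i$, which is exactly the assumption of the corollary, and that $C^\perp$ is an $(n+1-i)$-MDS code. Since $n+1-i=n-k-d+2$, and the generalized Singleton bound of Lemma~\ref{singleton} for the $[n,n-k]$ code $C^\perp$ reads $d^\perp_{n-k-d+2}\leqslant k+(n-k-d+2)=n-d+2$, the task reduces to proving the single equality $d^\perp_{n-k-d+2}=n-d+2$.

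This is the one genuine step, and I would carry it out with Wei duality: the sets $\{d_r(C):1\leqslant r\leqslant k\}$ and $\{n+1-d_r(C^\perp):1\leqslant r\leqslant n-k\}$ partition $\{1,2,\ldots,n\}$. Because the generalized weights of $C$ are strictly increasing and $d_1(C)=d$, none of $1,2,\ldots,d-1$ occurs as some $d_r(C)$, so all of them occur as $n+1-d_r(C^\perp)$; equivalently, exactly $d-1$ generalized weights of $C^\perp$ satisfy $d_r(C^\perp)\geqslant n-d+2$. These are $d-1$ distinct integers in the $(d-1)$-element set $\{n-d+2,n-d+3,\ldots,n\}$, hence they are precisely $n-d+2,n-d+3,\ldots,n$, and being the top $d-1$ weights $d^\perp_{n-k-d+2}<d^\perp_{n-k-d+3}<\cdots<d^\perp_{n-k}$ in increasing order, the smallest one is $d^\perp_{n-k-d+2}=n-d+2$, as wanted. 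Thus $C^\perp$ is $(n-k-d+2)$-MDS, and Proposition~\ref{hei} gives that the bound of Theorem~\ref{main} is sharp for $i=k+d-1$ (with $c_i=\gamma_i=2$).

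The only point demanding care is the index bookkeeping in the Wei-duality step — matching ``the $d-1$ integers below $d$ that are missing from the weight set of $C$'' with ``the top $d-1$ generalized weights of $C^\perp$'' and pinning the relevant index at $n-k-d+2$; everything else is a routine substitution into Proposition~\ref{hei}.
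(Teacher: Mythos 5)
Your proposal is correct and follows essentially the same route as the paper's proof: both use Wei duality to upgrade the generalized Singleton bound to the equality $d^\perp_{n+1-i}=n+2-d$, so that $C^\perp$ is $(n+1-i)$-MDS, and then conclude via Proposition~\ref{hei}. You merely spell out in more detail the Wei-duality counting (which the paper states without elaboration) and add the harmless extra check that $k+1\leqslant i\leqslant n$.
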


\begin{proof}
Set $s:=n+2-k-d=n+1-i$. From the generalized Singleton bound, it follows that $$d_s^{\perp}\leqslant k+(n+2-k-d)=n+2-d.$$ By  Wei  duality, this is in fact an equality, because of  $$(d_s^{\perp},d_{s+1}^{\perp},\ldots,d_{n-k}^{\perp})=(n+2-d,n+3-d,\ldots,n).$$ Thus $C^{\perp}$ is $n+1-i=s$-MDS. Thus the result follows from Proposition~\ref{hei} (and $c_i=\gamma_i=2$).
\qed\end{proof}

\begin{example} \label{multiplyprojective}
Let $k$ and $j$ be natural numbers, and let $G$ be a $k\times n$ matrix over ${\mathbb F}_q$, where $n=j\frac{q^k-1}{q-1},$ and where we for each point of ${\mathbb P}_n^{k-1}$ over ${\mathbb F}_q$  use $j$ of its representatives in ${\mathbb F}_q^k$ as columns of $G$. Using $G$ as generator matrix of a $q$-ary code $C$ we get a simplex code for $j=1$. For  an arbitrary natural number $j$ we get a constant weight code with weight $d(C)=d_1(C)=d_{\mbox{\rm max}}(C)=jq^{k-1}.$ We assume $k \geqslant 2$ and conclude that $d_{\mbox{\rm max}}(C) < jq^{k-1}+k-1=k+d(C)-1$. Hence Corollary~\ref{smart} can be applied, and our bound for $c_i$ in Theorem~\ref{main} is sharp for $i=jq^{k-1}+k-1.$ From the proof of that result we also know that that bound is $2$. In the case $j=4$, $k=q=2$, we have $k+1=3$, $n=12$, while we obtain sharpness for the intermediate value $i=9$ (in the table of Example~\ref{Hamming} the value for the $i$ in question is $11$).
 \end{example}


\begin{thebibliography}{A}

\bibitem{BB}Ball, S., Blokhuis, A., \textit{A bound for the maximum weight of a linear code}, SIAM J. Discrete Math.~{\bf 27}(1), 575--583 (2013).

 \bibitem {BS} Britz, T., Shiromoto, K., \textit{On the covering dimension of a linear code}, arXiv:1504.02357 (2015).
 
 \bibitem {CR} Crapo, H., Rota, G.-C. \textit{On the Foundation of Combinatorial Theory: Combinatorial Geometries (Preliminary Edition)}, MIT Press, Cambridge (1970).

\bibitem {JRV} Johnsen, T., Roksvold, J.N., Verdure, H. \textit{A generalization of weight polynomials to matroids}, arXiv:1311.6291 (2013).

\bibitem {J} Jurrius, R. \textit{Weight enumeration of codes from finite spaces}, Des. Codes Cryptogr.~{\bf 63}(3),  321--330 (2012).

\bibitem{Jthesis} Jurrius, R. \textit{Codes, arrangements, matroids, and their polynomial links}, PhD thesis, Technische Universiteit Eindhoven (2012). Full text available at http://alexandria.tue.nl/extra2/734704.pdf

\bibitem{JP} Jurrius, R., Pellikaan, R., \textit{Codes, arrangements and matroids}, in: Algebraic geometry modeling in information theory, Ser. Coding Theory Cryptol., 8, World Sci. Publ., Hackensack, NJ, 219--325 (2013).

\bibitem {Kl} Kl{\o}ve, T. \textit{\textit{Support weight distribution of linear codes}}, Discrete Math.~{\bf 106/107},  311--316 (1992).

\bibitem {K} Kung, J.P.S., \textit{Critical problems}, in: Matroid Theory, Seattle, WA, 1995, Contemporary Mathematics, {\bf 197}, American Mathematical Society, Providence, RI,  1--127 (1996).

\bibitem {W} Wei, V.K., \textit{Generalized Hamming weights for linear codes}, IEEE Trans.\ Inform.\ Theory~{\bf 37}(5),  1412--1418 (1991).

\end{thebibliography}
\end{document}